\DeclareMathOperator{\lt}{LT}
\DeclareMathOperator{\syz}{Syz}
\DeclareMathOperator{\sig}{Sig}
\DeclareMathOperator{\isyz}{ISYZ}
\DeclareMathOperator{\lm}{LM}
\DeclareMathOperator{\NM}{NM}
\DeclareMathOperator{\si}{Sig}
\DeclareMathOperator{\lcm}{LCM}
\DeclareMathOperator{\lc}{LC}
\DeclareMathOperator{\hf}{HF}
\DeclareMathOperator{\ihf}{IHF}
\DeclareMathOperator{\spoly}{SPoly}
\DeclareMathOperator{\poly}{Poly}
\DeclareMathOperator{\anc}{Anc}
\DeclareMathOperator{\cls}{cls}
\def\kk{\mathbbm{k}}
\def\I{\mathcal{I}}
\def\ri{\rangle}
\def\li{\langle}
\def\P{\mathcal{P}}
\def\M{\mathcal{M}}
\begin{document}
\title{Improved Computation of Involutive Bases}
\author{Bentolhoda Binaei\inst{1} \and Amir Hashemi\inst{1,2} \and Werner M. Seiler\inst{3}}
\institute{Department of Mathematical Sciences, 
Isfahan University of Technology\\ Isfahan, 84156-83111, Iran;
\and School of Mathematics, Institute for Research in Fundamental
Sciences (IPM), Tehran, 19395-5746, Iran\\
\email{h.binaei@math.iut.ac.ir}\\
\email{Amir.Hashemi@cc.iut.ac.ir}
\and Institut f\"{u}r Mathematik,
Universit\"at Kassel\\
Heinrich-Plett-Stra\ss e 40, 34132 Kassel, Germany\\
\email{seiler@mathematik.uni-kassel.de}
}
\maketitle

\begin{abstract}
  In this paper, we describe improved algorithms to compute
  Janet and Pommaret bases. To this end, based on the method proposed by M\"oller et
  al. \cite{MMT}, we present a more efficient variant of Gerdt's algorithm
  (than the algorithm presented in \cite{zbMATH06264259}) to compute {\em
    minimal} involutive bases. Further, by using the {\em involutive
    version} of Hilbert driven technique, along with the new variant of
  Gerdt's algorithm,  we modify the algorithm, given in \cite{Seiler2}, to compute a linear change of coordinates for a given homogeneous ideal so that the new ideal (after performing this change) possesses a finite Pommaret basis. All the proposed algorithms have been implemented in {\sc Maple} and their efficiency is discussed via a set of benchmark polynomials.
\end{abstract}

\section{Introduction}
        {\em Gr\"obner bases} are one of the most important concepts in computer algebra for dealing with multivariate polynomials.  A Gr\"obner basis is a special kind of generating set for an ideal which provides a computational framework to determine many properties of the ideal. The notion of Gr\"obner bases was originally introduced in 1965 by Buchberger in his Ph.D. thesis and he also gave the basic algorithm to compute it \cite{Bruno1,zbMATH05203452}. Later on, he  proposed two criteria for detecting superfluous reductions to improve his algorithm \cite{Bruno2}. In 1983, Lazard \cite{Daniel83} developed new approach by making connection between Gr\"obner bases and linear algebra. In 1988, Gebauer and M\"oller \cite{gm} reformulated Buchberger's criteria in an efficient way to improve Buchberger's algorithm. Furthermore, M\"oller et al. in \cite{MMT} proposed an improved version of Buchberger's algorithm by using the syzygies of constructed polynomials to detect useless reductions (this algorithm may be considered as the first {\em signature-based} algorithm to compute Gr\"obner bases). Relying on the properties of the Hilbert series of an ideal, Traverso \cite{zbMATH00987585} described the so-called {\em Hilbert-driven Gr\"obner basis algorithm} to improve Buchberger's algorithm by discarding useless critical pairs. In 1999, Faug\`{e}re \cite{F4} presented his F$_4$ algorithm to compute Gr\"{o}bner bases which stems from Lazard's approach \cite{Daniel83} and uses fast linear algebra techniques on  sparse matrices (this algorithm has been efficiently implemented in {\sc Maple} and  {\sc Magma}). In 2002, Faug\`ere presented the famous F$_5$ algorithm for computing Gr\"obner bases \cite{F5}. The efficiency of this signature-based algorithm benefits from  an incremental  structure  and  two new criteria, namely {\em F$_5$ and IsRewritten criteria} (nowadays known respectively as signature and syzygy criteria). We remark that several authors have studied signature-based algorithms to compute Gr\"obner bases and as the novel approaches in this directions we refer to e.g. \cite{g2v,gvw}.
        
        {\em Involutive bases} may be considered as an extension of
        Gr\"obner bases (w.r.t. a restricted monomial division) for
        polynomial ideals which include  additional combinatorial
        properties. The origin of involutive bases theory must be traced
        back to the work of Janet \cite{janet} on a constructive approach
        to the analysis of linear and certain quasi-linear systems of partial differential equations. Then Janet's approach was generalized to arbitrary (polynomial) differential systems by Thomas \cite{thomas}. Based on the related methods developed by Pommaret in his book \cite{pommaret}, the notion of {\em involutive polynomial bases} was introduced by Zharkov and Blinkov in \cite{zharkov}. Gerdt and Blinkov \cite{zbMATH01969239} introduced a more general concept of {\em involutive division} and {\em involutive bases} for polynomial ideals, along with algorithmic methods for their construction. An efficient algorithm was devised by Gerdt \cite{gerdt} (see also \cite{14a}) for computing involutive and Gr\"obner bases using the involutive form of Buchberger's criteria (see {\tt http://invo.jinr.ru} for the efficiency analysis of the implementation of this algorithm). In this paper, we refer to this algorithm as {\em Gerdt's algorithm}. Finally, Gerdt et al. \cite{zbMATH06264259} described a signature-based algorithm (with an incremental structure) to apply the F$_5$ criterion for deletion of unnecessary reductions. 
        Some of the drawbacks of this algorithm are as follows: Due to its incremental structure (in order to apply the F$_5$ criterion), the selection strategy should be the POT module monomial ordering (which may be not efficient in general). Further, to respect the signature of computed polynomials, the reduction process may be not accomplished and (that may increase the number of intermediate polynomials) that may significantly affect the efficiency of computation. Finally, the involutive basis that this algorithm returns may be not minimal.   

        The aim of this paper is to provide an effective method to calculate {\em Pommaret bases}. These bases introduced by Zharkov and Blinkov in \cite{zharkov} are a particular form of involutive bases containing  many combinatorial properties of the ideals they generate, see e.g. \cite{Seiler1,Seiler2,Seiler} for a comprehensive study of Pommaret bases. They are not only of interest in computational aspects of algebraic geometry (e.g. by providing deterministic approaches to transform a given ideal into some classes of generic positions \cite{Seiler2}), but they also serve in theoretical aspects of algebraic geometry (e.g. by providing simple and explicit formulas to read off many invariants of an ideal like dimension, depth and Castelnuovo-Mumford regularity \cite{Seiler2}).

        Relying on the method developed by M\"oller et al. \cite{MMT}, we give a new signature-based variant of Gerdt's algorithm to compute {\em minimal} involutive bases. In particular, the experiments show that the new algorithm is more efficient than Gerdt et al. algorithm \cite{zbMATH06264259}. On the other hand,  \cite{Seiler2} proposes an algorithm to compute {\em deterministically} a linear change of coordinates for a given homogeneous ideal so that the changed ideal (after performing this change) possesses a  finite Pommaret basis (note that in general a given ideal does not have a finite Pommaret basis). In doing so, one  computes iteratively the Janet bases of certain polynomial ideals. By applying the involutive version of Hilbert driven technique on the new variant of Gerdt's algorithm,  we modify this algorithm  to compute Pommaret bases. We have implemented all the algorithms described in this article and we assess their performance on a number of test examples.

The rest of the paper is organized as follows. In the next section, we will review the basic definitions and notations which will be used throughout this paper. Section 3 is devoted to the description of the new variant of Gerdt's algorithm. In Section 4, we present the improved  algorithm to compute a linear change of coordinates for a given homogeneous ideal so that the new ideal has a finite Pommaret basis. We analyze the performance of the proposed algorithms in Section 5. Finally, in Section 6 we conclude the paper by highlighting the advantages of this work and discussing future research directions. 
\section{Preliminaries}
In this section, we review the basic definitions and notations from the theory of Gr\"obner bases and involutive bases that will be used in the rest of the paper. Throughout this paper we assume that $\P=\kk[x_{1},\dots,x_{n}]$ is the polynomial ring (where $\kk$ is an infinite field). We consider also {\em homogeneous} polynomials $f_1,\ldots ,f_k\in \P$ and  the ideal $\I=\li f_1,\ldots ,f_k \ri$ generated by them. We denote the total degree of  and the degree w.r.t. a  variable $ x_{i} $ of a polynomial $f\in \P$ respectively by $ \deg(f) $ and $ \deg_{i}(f) $. Let $\M=\lbrace x_{1}^{\alpha_{1}}\cdots x_{n}^{\alpha_{n}} \mid \alpha_{i}\geq 0,\,1 \leq i \leq n \rbrace$ be the monoid of all monomials in $\P$. A monomial ordering on $\M$ is denoted by $\prec$ and throughout this paper we shall assume that $x_n\prec \cdots \prec x_1$. The leading monomial of a given polynomial $ f \in \P$ w.r.t. $ \prec $ will be denoted by $ \lm(f) $. If $ F \subset \P $ is a finite set of polynomials, we denote by $ \lm(F) $ the set $ \lbrace \lm(f)\ \mid f \in F  \rbrace $. The leading coefficient of $ f $, denoted by $ \lc(f) $, is the coefficient of $ \lm(f) $. The leading term of $ f $ is defined to be $ \lt(f)=\lm(f)\lc(f) $. A finite set $ G= \lbrace g_{1},\ldots , g_{k} \rbrace  \subset \P$ is called a {\em Gr\"obner basis} of $ \I $ w.r.t $ \prec $ if $ \lm(\I)= \langle \lm(g_{1}),\ldots , \lm(g_{k}) \rangle $ where $ \lm(\I)= \langle \lm(f) \,\,\vert \,\, f \in \I  \rangle $. We refer e.g. to \cite{little} for more details on Gr\"obner bases.

Let us recall the definition of Hilbert function and Hilbert series of a homogeneous ideal. Let $ X \subset \P $ and $s$ a positive integer. We define the degree $s$ part $X_{s}$ of $X$ to be the set of all homogeneous elements of $X$ of degree $s$.
\begin{definition}
The {\em Hilbert function} of $\I$ is defined by $\hf_{\I}(s)=\dim_{\kk}(\P_s/\I_s)$ where the right-hand side denotes the dimension of $\P_s/\I_s$ as a $\kk$-linear space.
\end{definition} 
It is well-known that the Hilbert function of $\I$ is the same as that of $\lt(\I)$ (see e.g. \cite[Prop. 4, page 458]{little}) and therefore the set of monomials not contained in $\lt(\I)$ forms a basis for $\P_s/\I_s$ as a $\kk$-linear space (Macaulay's theorem). This observation is the key idea behind the Hilbert-driven Gr\"obner basis algorithm. Roughly speaking, suppose that $\I$ is a homogeneous ideal and we want to compute  a Gr\"obner basis of $\I$ by Buchberger's algorithm in increasing order w.r.t. the total degree of  the S-polynomials. Assume that we know  beforehand $\hf_{\I}(s)$ for a positive integer $s$. Suppose that we are at the stage where we are looking at the critical pairs of degree $s$. Consider the set $P$ of all  critical pairs of degree $s$. Then, we compare $\hf_{\I}(s)$ with the Hilbert function at $s$ of the ideal generated by the leading terms of all already computed polynomials. If they are equal, we can remove $P$.

Below, we review some definitions and relevant results on involutive bases
theory (see \cite{gerdt} for more details). We recall first involutive
divisions based on  partitioning the variables into two subsets of the
variables, the so-called {\em multiplicative} and {\em non-multiplicative variables}. 
\begin{definition} 
  An {\em involutive division} $ \mathcal{L} $ is given on $\M $ if for any finite set $ U \subset \M $ and any $ u \in U $, the set of variables is partitioned into the subset of multiplicative $ M_{ \mathcal{L}}(u,U) $ and non-multiplicative variables $NM _{\mathcal{L}}(u,U) $ such that the following three conditions hold where $ \mathcal{L}(u,U) $ denotes the monoid generated by $M_{ \mathcal{L}}(u,U) $:
\begin{enumerate}
\item
 $ v,u \in U$, $u \mathcal{L}(u,U) \cap v \mathcal{L}(v,U) \neq \emptyset  $ $\Rightarrow$ $ u \in v\mathcal{L}(v,U)$ or $v \in u \mathcal{L}(u,U) $,
\item
$ v \in U$, $ v \in u\mathcal{L}(u,U)$ $ \Rightarrow $ $\mathcal{L}(v,U) \subset \mathcal{L}(u,U) $,
\item
$V \subset U$ and $ u \in V$ $ \Rightarrow $ $ \mathcal{L}(u,U) \subset \mathcal{L}(u,V)$.
\end{enumerate}
We shall write $ u \mid _{\mathcal{L}} w $ if $ w \in u \mathcal{L}(u,U) $. In this case, $ u $ is called an $ \mathcal{L} $-involutive divisor of $ w $ and  $ w $ an $ \mathcal{L} $-involutive multiple of $ u $. 
\end{definition}

We recall the definitions of the Janet and the Pommaret division, respectively.
\begin{example}
Let $ U \subset \P$ be a finite set of monomials. For each sequence $ d_{1}
, \ldots ,d_{n} $ of non-negative integers and for each $ 1 \le i  \le n $
we define the subsets
$$ [d_{1}, \ldots ,d_{i}]= \lbrace  u \in U \ | \ d_{j}= \deg_{j}(u) ,\,\,1 \leq j \leq i \rbrace .$$
The variable $ x_{1} $ is Janet multiplicative (denoted by $\mathcal{J}$-multiplicative) for $ u \in U $ if $ \deg_{1}(u)= \max \lbrace \deg_{1}(v)\,\, \vert \,\,\,v \in U \rbrace $. For $ i > 1 $ the variable $ x_{i} $ is Janet multiplicative for  $u \in [d_{1}, \ldots ,d_{i-1}] $ if $ \deg_{i}(u)= \max \lbrace \deg_{i}(v) \ | \  v \in [d_{1}, \ldots ,d_{i-1}] \rbrace $.

\end{example}

\begin{example}\label{def:red1}
For $ u=x_{1}^{d_{1}} \cdots x_{k}^{d_{k}} $  with $ d_{k}> 0 $ the variables $ \lbrace x_{k}, \ldots ,x_{n}  \rbrace $ are considered as Pommaret multiplicative (denoted by $\mathcal{P}$-multiplicative) and the other variables as Pommaret non-multiplicative. For $ u = 1 $ all the variables are multiplicative. The integer $k$ is called the {\em class} of $u$ and is denoted by $\cls(u)$.
\end{example}
The Pommaret division is called a {\em global division}, because the
assignment of the multiplicative variables is independent of the set
$U$. In order to avoid repeating notations let $ \mathcal{L} $ always denote an involutive division.

\begin{definition}
  The set  $ F \subset \P $ is called {\em involutively head autoreduced} if for each $f\in F$ there is no $h\in F\setminus \{f\}$ with $\lm(h) \mid _{\mathcal{L}} \lm(f)$.
\end{definition}
\begin{definition}\label{def:red}
Let $I \subset \P$ be an ideal. An $\mathcal{L}$-involutively head autoreduced subset $G \subset \I$ is an {\em $\mathcal{L}$-involutive basis} for $\I$ (or simply either  an involutive basis or  $\mathcal{L}$-basis)  if for all $f \in \I$ there exists $g\in G$ so that $\lm(g) \mid_{\mathcal{L}} \lm(f)$.
 \end{definition}
\begin{example}
Let $ \I=\lbrace x_{1}^{2}x_{3},x_{1}x_{2},x_{1}x_{3}^{2}   \rbrace \subset \kk[x_1,x_2,x_3]$. Then, $ \lbrace x_{1}^{2}x_{3},x_{1}x_{2},x_{1}x_{3}^{2},\\x_{1}^{2}x_{2}   \rbrace $ is a Janet basis for $\I$ and $ \lbrace x_{1}^{2}x_{3},x_{1}x_{2},x_{1}x_{3}^{2},x_{1}^{2}x_{2},x_{1}^{i+3}x_{2},x_{1}^{i+3}x_{3} \ | \ i \geq 0   \rbrace $ is a (infinite) Pommaret basis for $\I$. Indeed, Janet division is Noetherian, however Pommaret division is non-Noetherian (see \cite{12a} for more details). 
\end{example} 
Gerdt in \cite{gerdt} proposed an efficient algorithm to  construct
involutive bases based on a completion process where prolongations of generators by non-multiplicative variables are reduced. This process terminates in finitely many steps for any Noetherian division.

\begin{definition}
Let $ F \subset \P$ be a finite. Following the notations in \cite{Seiler2},  the {\em involutive span} generated by $ F $ is denoted by $\li F \ri_{\mathcal{L},\prec}$.
\end{definition}
Thus, a set $F\subset \I$ is an involutive basis for $\I$ if we have $\I=\li F \ri_{\mathcal{L},\prec}$. 
\begin{definition}
Let $F\subset \I $ be an involutively head autoreduced set of homogeneous polynomials. The  {\em involutive Hilbert function} of $F$ is defined by $\ihf _{F}(s)= \dim_{\kk} ( \P_{s} / (\li F \ri_{\mathcal{L},\prec})_s)$. 
\end{definition} 
Since $F$ is involutively head autoreduced, one easily recognizes that $\li F \ri_{\mathcal{L},\prec}=\bigoplus_{f\in F}\kk[M_{\mathcal{L}}(\lm(f),\lm(F))]\cdot f$. Thus using the well-known combinatorial formulas to count the number of monomials in certain variables, we get 
$$ \ihf _{I}(s)= \binom{n+s-1}{s}-\sum _{f \in F} \binom{s-\deg(f)+k_f-1}{s-\deg(f)}$$
where $k_f$ is the number of multiplicative variables of $f$ (see
e.g. \cite{gerdt}). We remark that an involutively head autoreduced
subset  $ F \subset \I$ is an involutive basis for $ \I $ if and only if  $ \hf _{\I}(s)=\ihf _{F}(s) $ for each $s$.
\section{Using Syzygies to Compute Involutive Bases}
We now propose a variant of Gerdt's algorithm \cite{gerdt} by
using the intermediate computed syzygies to compute involutive bases and
especially Janet bases. For this, we recall briefly the signature-based variant of M\"oller et al. algorithm \cite{MMT} to compute Gr\"obner bases (the practical results are given in Section 5).
\begin{definition} 
Let us consider $ F=(f_{1}, \ldots ,f_{k}) \in \P^{k}$. The (first) {\em syzygy module} of $F$ is defined to be $  \syz(F)= \lbrace (h_{1}, \ldots , h_{k}) \ | \  h_{i} \in \P, \sum _{i=1}^{k}h_{i}f_{i}=0 \rbrace$.
\end{definition} 

Schreyer in his master thesis proposed a slight modification of Buchberger's algorithm to compute a Gr\"obner basis for the module of syzygies of a Gr\"obner basis. The construction of this basis relies on the following key observation (see \cite{cox}): Let $G = \lbrace g_{1} , \ldots , g_{s} \rbrace $ be a Gr\"obner basis. By tracing  the dependency of each $\spoly(g_i,g_i)$ on $G$ we can write  $ \spoly(g_{i},g_{j})=\sum_{k=1}^{s}  a_{ijk}g_{k}$ with $a_{ijk}\in \P$. Let ${\bf e}_{1}, \ldots ,{\bf e}_{s} $ be the standard basis for $ \P^{s} $ and $ m_{ij} = lcm( \lt (g_{ i} ), \lt (g_{ j} ))$. Set
$$\textbf{s}_{ ij} =  m_{i,j} / \lt(g_{i}).\textbf{e}_{i} -  m_{i,j} / \lt(g_{j}).\textbf{e}_{j}  - (a_{ ij1}\textbf{e}_{1}+ a_{ij2} \textbf{e}_{2}+ \cdots + a_{ ijs}\textbf{e}_{s}) .$$
\begin{definition} 
  Let $G=\lbrace g_{1} , \ldots , g_{s} \rbrace \subset \P $. {\em Schreyer's module ordering} is defined as follows: $ x^{ \beta} \textbf{e}_{ j} \prec _{s} x^{\alpha} \textbf{e}_{ i} $ if $ \lt (x^{ \beta} g_{ j}) \prec  \lt (x^{ \alpha} g _{i} )$ and breaks ties by $i<j$.
\end{definition} 
\begin{theorem}[Schreyer's Theorem]
For a Gr\"obner basis $G = \lbrace g_{1} , \ldots , g_{s} \rbrace $ the set  $\{\textbf{s}_{ ij} \ | \ 1\le i<j\le s\}$ forms a Gr\"obner basis for  $ \syz(g_{ 1} , \ldots , g_{ s} )$ w.r.t. $\prec_s$.
\end{theorem}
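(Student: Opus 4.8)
The plan is to verify the two defining requirements of a Gr\"obner basis for the submodule $\syz(G)\subset\P^s$ with respect to $\prec_s$: first that every $\textbf{s}_{ij}$ actually lies in $\syz(G)$, and second that the leading terms $\{\lt_{\prec_s}(\textbf{s}_{ij})\}$ generate the leading term module $\lt_{\prec_s}(\syz(G))$. The first requirement is immediate from the construction: pairing $\textbf{s}_{ij}$ with $(g_1,\dots,g_s)$ yields $m_{ij}/\lt(g_i)\cdot g_i - m_{ij}/\lt(g_j)\cdot g_j - \sum_k a_{ijk}g_k = \spoly(g_i,g_j)-\spoly(g_i,g_j)=0$, using the standing relation $\spoly(g_i,g_j)=\sum_k a_{ijk}g_k$. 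Hence each $\textbf{s}_{ij}$ is a syzygy.

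Next I would pin down $\lt_{\prec_s}(\textbf{s}_{ij})$ for $i<j$, the claim being $\lt_{\prec_s}(\textbf{s}_{ij})=m_{ij}/\lt(g_i)\cdot\textbf{e}_i$. The two ``S-polynomial'' terms $m_{ij}/\lt(g_i)\cdot\textbf{e}_i$ and $m_{ij}/\lt(g_j)\cdot\textbf{e}_j$ both map, under the rule defining $\prec_s$, to the same monomial $m_{ij}$; they therefore tie on the first criterion, and the tie is resolved by the index comparison $i<j$ in favour of the $\textbf{e}_i$ term. It then remains to see that every tail term $a_{ijk}\textbf{e}_k$ is strictly smaller. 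Since $G$ is a Gr\"obner basis, the division algorithm reduces $\spoly(g_i,g_j)$ to zero with $\lt(a_{ijk}g_k)\preceq\lt(\spoly(g_i,g_j))$ for each $k$, while the leading terms of the two S-polynomial summands cancel by construction, so $\lt(\spoly(g_i,g_j))\prec m_{ij}$. Thus $\lt(a_{ijk}g_k)\prec m_{ij}$, which places each $a_{ijk}\textbf{e}_k$ strictly below $m_{ij}/\lt(g_i)\cdot\textbf{e}_i$ in $\prec_s$, confirming the claim.

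The heart of the proof is the second requirement, and here I would argue by cancellation of leading terms. Take an arbitrary nonzero $S=\sum_\ell h_\ell\textbf{e}_\ell\in\syz(G)$ and set $x^\gamma=\max_\ell \lt(h_\ell)\lt(g_\ell)$, the maximum running over $\ell$ with $h_\ell\neq0$; by the definition of $\prec_s$ the leading term of $S$ is $\lt(h_i)\textbf{e}_i$, where $i$ is the smallest index attaining this maximum, and $\lt(h_i)\lt(g_i)=x^\gamma$. Because $\sum_\ell h_\ell g_\ell=0$, the coefficient of $x^\gamma$ in this sum vanishes; as every term of each $h_\ell g_\ell$ is $\preceq\lt(h_\ell)\lt(g_\ell)\preceq x^\gamma$, the monomial $x^\gamma$ is produced only by the leading products, and since the contribution of index $i$ is nonzero there must be a second index attaining the maximum. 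Choosing such a $j>i$, I obtain $\lt(g_i)\mid x^\gamma$ and $\lt(g_j)\mid x^\gamma$, hence $m_{ij}\mid x^\gamma$ and therefore $m_{ij}/\lt(g_i)$ divides $x^\gamma/\lt(g_i)=\lt(h_i)$. Consequently $\lt_{\prec_s}(\textbf{s}_{ij})=m_{ij}/\lt(g_i)\cdot\textbf{e}_i$ divides $\lt(h_i)\textbf{e}_i=\lt_{\prec_s}(S)$ as module monomials.

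I expect the main obstacle to be the bookkeeping in this last step: correctly reading off the leading term of $S$ from Schreyer's ordering and justifying that the global maximum $x^\gamma$ is attained at least twice, so that a genuine pair $(i,j)$ with $i<j$ is available. Once the divisibility $\lt_{\prec_s}(\textbf{s}_{ij})\mid\lt_{\prec_s}(S)$ is established for every nonzero $S\in\syz(G)$, the two requirements together show that $\{\textbf{s}_{ij}\mid 1\le i<j\le s\}$ is a Gr\"obner basis of $\syz(g_1,\dots,g_s)$ with respect to $\prec_s$, since a subset of a submodule whose leading terms generate the leading term module is automatically a Gr\"obner basis of that submodule (and in particular generates it).
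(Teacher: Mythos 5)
Your argument is correct and is the standard textbook proof of Schreyer's theorem (cancellation of the top degree $x^\gamma$ forcing a second index $j>i$, whence $m_{ij}/\lt(g_i)$ divides $\lt(h_i)$); the paper itself states this result without proof, citing \cite{cox}, so there is no internal proof to compare against. The only point worth making explicit is that your bound $\lt(a_{ijk}g_k)\preceq\lt(\spoly(g_i,g_j))\prec m_{ij}$ requires the $a_{ijk}$ to come from a standard representation (e.g.\ the division algorithm), which is the intended reading of the paper's phrase ``tracing the dependency of each S-polynomial on $G$.''
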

\begin{example}
  Let $ F= \lbrace  xy-x, x^2-y \rbrace  \subset \kk[x,y]  $. The Gr\"obner basis of $ F $ w.r.t. $x\prec_{dlex} y$ is $ {G= \lbrace g_{1}=xy-x,g_{2}=x^2-y,g_{3}=y^2-y \rbrace} $ and the Gr\"obner basis of $\syz(g_1,g_2,g_3)$ is $ \lbrace (x,-y+1,-1),(-x,y^{2}-1,-x^{2}+y+1), (y,0,-x) \rbrace$.
\end{example}

According to this observation, M\"oller et al. \cite{MMT} proposed a
variant of Buchberger's algorithm by using the syzygies of constructed
polynomials to remove superfluous reductions. Algorithm \ref{alg:gb} below
corresponds to it with a slight modification to derive a  signature-based algorithm
to compute Gr\"obner bases.  We associate to each polynomial $f$, the
two-tuple $ p = ( f, m\textbf{e}_{i} ) $ where $\poly(p)=f$ is the
polynomial part of $f$ and $\si(p)=m\textbf{e}_{i}$ is its
signature. Further, the function  {\sc NormalForm}($f, G$) returns a
remainder of the division of $f$ by $G$. Further, if
$\si(p)=m\textbf{e}_{i}$ in the first step of reduction process we must not use $f_i\in G$.
\begin{algorithm}[ht]
\caption{{\sc Gr\"obnerBasis}\label{alg:gb}}
\begin{algorithmic}
 \STATE {\bf{Input:}} A set of polynomials $ F\subset \P $; a monomial ordering $ \prec $
 \STATE {\bf{Output:}} A Gr\"obner basis $G$ for $ \langle F \rangle $ 
  \STATE $G :=  \lbrace \rbrace $ and $syz :=  \lbrace \rbrace $ 
   \STATE  $P :=  \lbrace  (F[i], \textbf{e}_{i}) \, \vert \,  i=1 ,\ldots, \vert F \vert  \rbrace  $
\WHILE { $P \neq \emptyset $} 
 \STATE select (using normal strategy) and remove $ p \in P $ 
 \IF { $ \nexists\,\, s \in syz $  s.t.  $ s \mid \si(p) $ } 
 \STATE $f := \poly(p)$
 \STATE  $ h :=$ {\sc NormalForm}($f, G$)
 \STATE $syz := syz \cup \{\si(p)\}$
 \IF {$ h \neq 0$ } 
  \STATE $j := \vert G \vert +1$
 \FOR {$ g \in G $} 
 \STATE $ P:=P \cup \lbrace (r.h,r.\textbf{e}_{j}) \rbrace $ s.t. $ r.\lm(h)= \lcm(\lm(g),\lm(h)) $
 \STATE $G :=G  \cup \, \lbrace h \rbrace $
 \STATE $syz:=syz \cup \{\lm(g).\textbf{e}_{j} \ | \ \lm(h)  \ {\rm and} \  \lm(g)  \ {\rm are  \ coprime}\}$
 \ENDFOR 
 \ENDIF
 \ENDIF
 \ENDWHILE
\STATE {\bf{return}} $ (G)$
\end{algorithmic}
\end{algorithm}
We show now how to apply this structure to improve Gerdt's algorithm \cite{14a}. 
\begin{definition}
 Let $F= (f_{1}, \ldots ,f_{k}) \subset \P^k$ be a sequence of polynomials. The {\em involutive syzygy module} $  \isyz(F)$ of  $ F $ is the set of all $ (h_{1}, \ldots , h_{k})\in \P^k$ so that $\sum _{i=1}^{k}h_{i}f_{i}=0$  where $ h_{i} \in \kk[M_{ \mathcal{L}}(\lm(f_i),\lm(F))] $.
\end{definition}
\cite[Thm. 5.10]{Seiler2} contains an involutive
version of Schreyer's theorem replacing S-polynomials by non-multiplicative
prolongations and using involutive division. Algorithm \ref{alg:ib} below
represents the new variant of Gerdt's algorithm for computing involutive bases using involutive syzygies. For this purpose, we associate to each polynomial $f$, the quadruple $ p = ( f, g, V,m.\textbf{e}_{i} ) $ where $f = \poly(p)$ is the polynomial itself, $g=\anc(p)$ is its ancestor, $V=\NM (p)$ is the list of non-multiplicative variables of $f$ which have been already processed in the algorithm and $m.\textbf{e}_{i}=\si(p) $ is the signature of $f$. If $P$ is a set of quadruple, we denote by $ \poly(P)$ the set $ \lbrace \poly(p) \ | \  p \in P \rbrace$.
\begin{algorithm}[ht]
\caption{{\sc InvolutiveBasis}\label{alg:ib}}
\begin{algorithmic}
 \STATE {\bf{Input:}} A finite set  $ F\subset \P$; an involutive division $ \mathcal{L} $; a monomial ordering $ \prec $ 
 \STATE {\bf{Output:}} A minimal $ \mathcal{L} $-basis for $ \langle F \rangle $
 \STATE $F:=$sort$(F,\prec)$
 \STATE$T:= \lbrace  (F[1],F[1], \emptyset ,\textbf{e}_{1}) \rbrace$ 
 \STATE$Q:= \lbrace (F[i],F[i], \emptyset ,\textbf{e}_{i})  \ | \ i=2, \ldots ,\vert F \vert \rbrace$
  \STATE $syz:= \lbrace \rbrace$
\WHILE { $Q \neq \emptyset $}
\STATE $Q:=$sort$(Q,\prec_{s})$
\STATE $p:=Q[1]$
\IF { $\nexists s \in syz $  s.t  $  s \mid \si(p) $  with non-constant quotient}
\STATE $ h:=$ {\sc InvolutiveNormalForm}$(p,T, \mathcal{L}, \prec) $
\STATE $ syz:=syz\cup \{h[2]\}$
\IF { $ h=0 $  and  $ \lm( \poly(p)) = \lm(\anc(p)) $ }
\STATE $ Q:= \lbrace q \in Q\,\, \vert \,\, \anc(q) \neq \poly(p) \rbrace $
\ENDIF
\IF { $ h \neq 0 $  and $ \lm(\poly(p)) \neq \lm(h) $ }
\FOR {$ q \in T  $  with proper conventional division $ \lm(\poly(h)) \mid \lm(\poly(q)) $}
\STATE $ Q:=Q \cup \lbrace q \rbrace $
\STATE $ T:=T \setminus \lbrace q \rbrace $
\ENDFOR
\STATE $j:= \vert T \vert +1$
\STATE $ T:=T \cup \lbrace ( h,h,\emptyset,\textbf{e}_{j} )  \rbrace $
\ELSE
\STATE $ T:=T \cup \lbrace ( h ,\anc(p),\NM(p),\si(p))  \rbrace $ 
\ENDIF
\FOR { $ q \in T  $ and $  x \in NM_{ \mathcal{L}}( \lm(\poly(q)), \lm(\poly(T)) \setminus \NM(q))$}
\STATE $ Q:=Q \cup  \lbrace (x. \poly(q),\anc(q),\emptyset,x.\si(q)) \rbrace    $
\STATE { \small $\NM(q):=\NM(q)  \cup NM_{\mathcal{L}}(\lm( \poly(q)),\lm( \poly(T))) \cup \lbrace x \rbrace$}
\ENDFOR
\ENDIF

\ENDWHILE 
\STATE  {\bf{return}} $(\poly(T))$
\end{algorithmic}
\end{algorithm} 

In this algorithm, the functions sort($X,\prec$) and sort($ X,\prec _{s} $) sort $X$ by increasing,  respectively,  $ \lm(X)$ w.r.t. $ \prec $ and
 $\{\si(p)\ | \ p\in X\}$ w.r.t. $\prec_s$. 
The involutive normal form algorithm is given in Algorithm \ref{alg:inf}.

\begin{algorithm}[ht]
\caption{{\sc InvolutiveNormalForm}\label{alg:inf}}
\begin{algorithmic}
 \STATE {\bf{Input:}} A quadruple $ p $; a set of quadruples $ T $; an involutive division $ \mathcal{L} $; a monomial ordering $ \prec $ 
 \STATE {\bf{Output:}} An $ \mathcal{L} $-normal form of $ p $ modulo $ T $, and the corresponding signature, if any
  \STATE $ S:= \lbrace \rbrace $ and $ h := \poly(p)  $ and  $G := \poly(T) $
\WHILE {  $h$ has a monomial $m$ which is $ \mathcal{L}$-divisible by $G$ }
\STATE select $ g \in G $  with  $ \lm(g) \mid_{ \mathcal{L}} m $
\IF { $ m = \lm(\poly(p)) $  and  ($  m/\lm(g).\si(g)=\sig(p) $ or {\sc Criteria}$(h, g)$)}
\STATE
 {\bf{return}} $ (0,S) $
\ENDIF
\IF { $ m = \lm(\poly(p))$  and   $  m/\lm(g).\sig(g)  \prec_{s} \sig(p) $  }
\STATE $ S:=S \cup \lbrace \sig(p) \rbrace $
\ENDIF
\STATE $ h := h - cm/\lt(g).g $ where $c$ is the coefficient of $m$ in $h$
\ENDWHILE 
\STATE  {\bf{return}} $( h ,S)$
\end{algorithmic}
\end{algorithm}

 Furthermore, we apply the involutive form of Buchberger's criteria from \cite{gerdt}. We say that {\sc Criteria}$(p, g)$ holds if either $C_{1}(p, g)$ or $ C_{2}(p, g)$ holds where $C_{1}(p, g)$ is true if $ \lm(\anc(p)).\lm(\anc(g)) = \lm(\poly(p))$ and $C_{2}(p, g)$  is  true if $\lcm(\lm(\anc(p)), \lm(\anc(g)))$ properly divides $ \lm(\poly(p))$.
 \begin{remark}
   We shall remark that, due to the second {\bf if}-loop in  Algorithm \ref{alg:inf}, if  $m_i\textbf{e}_{i}$ is added into $syz$ then there exists an involutive representation of the form $m_ig_{i}=\sum_{j=1}^{\ell}{h_jg_j}+h$ where $T=\{g_1,\ldots ,g_\ell\} \subset \P$ is the output of the algorithm, $h$ is $\mathcal{L} $-normal form of $ p $ modulo $ T $  and $\lm(h_j)\textbf{e}_{j}\prec_s m_i\textbf{e}_{i}$ for each $j$. 
 \end{remark}

 In the next proof, by an abuse of notation, we refer to  the signature of a quadruple as the signature of its polynomial part.  
\begin{theorem}
{\sc InvolutiveBasis} terminates in finitely many steps (if $ \mathcal{L} $ is a Noetherian division)  and returns a minimal involutive basis for its input ideal.
\end{theorem}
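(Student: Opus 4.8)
The plan is to split the statement into three independent claims: (i) termination, (ii) correctness, i.e.\ the output $\poly(T)$ is an $\mathcal{L}$-involutive basis for $\li F\ri$, and (iii) minimality of that basis. I would prove them in this order, since termination licenses the inductive arguments used for the other two.

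\medskip
\noindent\textbf{Termination.} For termination I would appeal to the standard completion-process argument already invoked for Gerdt's algorithm in the excerpt: the main loop either enlarges $\lm(\poly(T))$ as a monomial ideal (when a nonzero normal form with a strictly larger leading monomial is found and inserted with a fresh signature $\textbf{e}_j$) or processes a prolongation that is involutively reducible (adding the handled variable to $\NM(q)$ and pushing at most finitely many new non-multiplicative prolongations into $Q$). Because $\mathcal{L}$ is Noetherian, the ascending chain of leading-term ideals stabilises, after which only finitely many prolongations remain to be cleared; the $\NM(q)$ bookkeeping guarantees each non-multiplicative variable of each element of $T$ is processed at most once, so $Q$ is eventually exhausted. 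The one new ingredient relative to Gerdt is the $syz$-based discarding: since this only \emph{removes} quadruples from $Q$ (via the signature-divisibility test and the deletion line when $h=0$), it cannot jeopardise termination.

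\medskip
\noindent\textbf{Correctness.} This is the crux, and is where I would spend most of the effort. Upon termination every element $q\in T$ has had all its non-multiplicative variables processed, so every prolongation $x\cdot\poly(q)$ with $x\in NM_{\mathcal{L}}(\lm(\poly(q)),\lm(\poly(T)))$ has at some point entered $Q$. I would argue that each such prolongation either involutively reduces to zero modulo $T$ or is discarded by a \emph{justified} syzygy criterion. The remark preceding the theorem is exactly the tool: whenever a signature $m_i\textbf{e}_i$ is recorded in $syz$, there is a genuine involutive standard representation $m_ig_i=\sum_j h_jg_j+h$ with $\lm(h_j)\textbf{e}_j\prec_s m_i\textbf{e}_i$, so discarding a quadruble whose signature is $\mathcal{L}$-divisible (with non-constant quotient) by some $s\in syz$ only suppresses a prolongation already known to be involutively rewritable in lower signature. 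Hence, by a Noetherian induction on $\prec_s$, no genuine obstruction to the involutive basis property survives unaccounted for. I would then invoke the involutive form of Buchberger's criteria (justified in \cite{gerdt}) to license the deletions performed by {\sc Criteria}, and the involutive Schreyer theorem \cite[Thm. 5.10]{Seiler2} to guarantee that the collected signatures indeed generate the relevant part of the involutive syzygy module, so that all prolongations are covered. Combining these, every prolongation of every $\poly(q)$ is $\mathcal{L}$-reducible to zero modulo $T$, which by the completion criterion means $\poly(T)$ is an $\mathcal{L}$-involutive basis of $\li F\ri$.

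\medskip
\noindent\textbf{Minimality.} Finally I would verify minimality. The block of code that, before inserting a new $h$ with $\lm(\poly(p))\neq\lm(h)$, moves every $q\in T$ whose leading monomial is properly conventionally divisible by $\lm(h)$ back into $Q$ ensures that $T$ is kept conventionally head-autoreduced: no element's leading monomial properly divides another's. I would show that this, together with involutive head-autoreduction maintained throughout, yields the minimal involutive basis in the sense that no generator can be removed without losing the involutive basis property. \textbf{The main obstacle} I anticipate is the correctness step: one must check carefully that the signature-based discarding (the $syz$ test with non-constant quotient together with the second \textbf{if}-branch of {\sc InvolutiveNormalForm} that feeds $syz$) never discards a prolongation whose involutive reduction is actually needed, i.e.\ that the recorded syzygies are \emph{complete enough} to cover exactly the superfluous prolongations and no more. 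This amounts to proving a faithful correspondence between the signatures accumulated in $syz$ and the involutive syzygy module $\isyz$, for which the involutive Schreyer theorem is essential but must be married to the POT-free, signature-sorted selection actually used in the loop.
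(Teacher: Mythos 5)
Your proposal is correct and follows essentially the same route as the paper: both delegate the bulk of termination and correctness to Gerdt's algorithm and concentrate on showing that the syzygy-based deletions (the divisibility test in {\sc InvolutiveBasis} and the equal-signature branch of {\sc InvolutiveNormalForm}) only discard superfluous prolongations, via the representation $m'_ig_i=\sum_j h_jg_j+h$ with $\lm(h_j)\textbf{e}_j\prec_s m'_i\textbf{e}_i$ multiplied through by the quotient $u$ and an induction on the Schreyer ordering used for selection. The paper's actual proof is somewhat more economical (it does not separately argue minimality, inheriting it from the structure of Gerdt's algorithm), but the crux you identify and the way you resolve it coincide with the paper's argument.
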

\begin{proof}
The termination and correctness of the algorithm are inherited from those
of Gerdt's algorithm \cite{gerdt} provided that  we show that any
polynomial removed using syzygies is superfluous. This happens in both
algorithms. Let us deal first with Algorithm \ref{alg:ib}. Now, suppose
that for $p\in Q$ there exists $s\in syz$ so that $  s \mid \si(p) $  with
non-constant quotient. Suppose that $\si(p)=m_i\textbf{e}_{i}$ and
$s=m'_i\textbf{e}_{i}$ where $m_i=um'_i$ with $u\ne 1$. Let $T=\{g_1,\ldots
,g_\ell\}\subset \P$ be the output of the algorithm  and
$m'_ig_{i}=\sum_{j=1}^{\ell}{h_jg_j}+h$ be the representation of
$m'_ig_{i}$ with $g_j\in T, h,h_j\in P$ and $h$ the involutive remainder of
the division of $m'_ig_{i}$ by $T$. Then, from the structure of both
algorithms, it yields that $\lm(h_jg_j)\prec \lm(m'_ig_i)$. In particular,
we have $\lm(h_j)\textbf{e}_{j}\prec_s m'_i\textbf{e}_{i}$ for each
$j$. This follows that $\lm(uh_j)\textbf{e}_{j}\prec_s
um'_i\textbf{e}_{i}=m_i\textbf{e}_{i}$ for each $j$. On the other hand, if
$h\ne 0$ then again by the structure of the algorithm $uh$ has a signature
less than $m_i\textbf{e}_{i}$. For each $j$ and for each term $t$ in $h_j$
we know that the signature of $utg_j$ is less than $m_i\textbf{e}_{i}$ and
by the selection strategy used in the  algorithm which is based on
Schreyer's ordering, $utg_j$ should be studied before $m'_ig_{i}$ and
therefore it has an involutive representation in terms of $T$. Furthermore,
the same holds also for $uh$ provided that $h\ne 0$. These arguments show
that $m'_ig_{i}$ is unnecessary and it can be omitted. Now we turn to
Algorithm \ref{alg:inf}. Let $p\in Q$ and $g\in T$ so that  $\lm(h)=u\lt(g)$ and $\si(p)=u\si(g)$ where $h=\poly(p)$ and $u$ is a monomial. Using the above notations, let $\si(p)=m_i\textbf{e}_{i}$ and $\si(g)=m'_i\textbf{e}_{i}$ where $m_i=um'_i$. Further, let $m'_ig_{i}=\sum_{j=1}^{\ell}{h_jg_j}+g$ be the representation of $m'_ig_{i}$ with $\lm(h_j)\textbf{e}_{j}\prec_s m'_i\textbf{e}_{i}$ for each $j$. It follows from the assumption that $\lm(h_jg_j)\prec \lm(m'_ig_{i})=\lm(g)$ for each $j$. We can write $um'_ig_{i}=\sum_{j=1}^{\ell}{uh_jg_j}+ug$. Since $\lm(uh_j)\textbf{e}_{j}\prec_s um'_i\textbf{e}_{i}=m_i\textbf{e}_{i}$ for each $j$ then, by repeating the above argument, we deduce that $uh_jg_j$ for each $j$ has an involutive representtaion. Therefore, $um'_ig_{i}$ has a representation using the fact that $u$ is multiplicative for $g$. Thus $h$ has a representation and it can be removed. $\qed$
\end{proof}

\section{Hilbert Driven Pommaret Bases Computations}
As we mentioned Pommaret division is not  Noetherian and therefore, a given ideal may not have a finite Pommaret basis. However, if the ideal is in {\em quasi-stable position} (see Def. \ref{defsta}) it has a finite Pommaret basis. On the other hand, a generic linear change of variables transforms an ideal in such a position. Thus, one of the challenges in this direction is to find a linear change of variables  so that the  ideal after performing this change possesses a  finite Pommaret basis.  \cite{Seiler2} proposes a deterministic  algorithm to compute such a linear change  by computing repeatedly the Janet basis of the last transformed ideal. In this section, by using the algorithm described in Section 3, we show how one can incorporate an involutive version of Hilbert driven strategy to improve this algorithm.
\begin{algorithm}[H]
\caption{{\sc HDQuasiStable}}
\begin{algorithmic}
 \STATE {\bf{Input:}}  A finite set $F \subset \P$ and a monomial ordering $\prec$ 
 \STATE {\bf{Output:}} A linear change $\Phi$ so that $\langle \Phi(F) \rangle$ has a finite Pommaret basis
 \STATE $\Phi:=\emptyset$ and  $J:=${\sc InvolutiveBasis}$(F,\mathcal{J},\prec)$ and $ A:=${\sc test}$(\lm(J),\prec ) $
 \WHILE  {$ A \neq true $} 
 \STATE $ G:=$ substitution of  $\phi:=A[3]\mapsto A[3]+cA[2]$ in $J$ for a random choice of $c\in K$
 \STATE $Temp:=${\sc HDInvolutiveBasis}$(G,\mathcal{J},\prec)$
 \STATE $ B:=${\sc test}$(\lm(Temp)) $
 \IF  {$ B \neq A $} 
 \STATE  $\Phi:=\Phi,\phi$ and  $ J:=Temp$ and   $A := B$
 \ENDIF
 \ENDWHILE
\STATE  {\bf{return}} $(\Phi)$
\end{algorithmic}
\end{algorithm} 
It is worth noting that in \cite{Seiler2} it is proposed to perform a Pommaret head
autoreduced process on the calculated Janet basis at each iteration. However, we do not need to perform this operation because each computed Janet basis is {\em minimal} and by  \cite[Cor. 15]{zbMATH01574478} each minimal Janet basis is Pommaret head autoreduced. All the used functions are described below.  By the structure of the algorithm, we first compute a Janet basis for the input ideal using {\sc InvolutiveBasis} algorithm. From this basis, one can read off easily the Hilbert function of the input ideal. Further, the Hilbert function of an ideal does not change after performing a linear change of variables. Thus we can apply this Hilbert function in the next Janet bases computations as follows. The algorithm has the same structure as the {\sc InvolutiveBasis} algorithm and so we remove the similar lines. We add the next written lines in {\sc HDInvolutiveBasis} algorithm between $p:=Q[1]$ and the first {\bf if}-loop in {\sc InvolutiveBasis} algorithm.

\begin{algorithm}[H]
\caption{{\sc HDInvolutiveBasis}}
\begin{algorithmic}
 \STATE {\bf{Input:}} A set of monomials $ F $; an involutive division $ \mathcal{L} $ ; a monomial ordering $ \prec $
 \STATE {\bf{Output:}} A minimal $ \mathcal{L} $-involutive basis for $ \langle F \rangle $
 \STATE $\vdots$
\STATE  $d:= \deg(p) $
\WHILE{$ \hf_{\li F\ri}(d) = \ihf_{T}(d)$}
\STATE remove from $Q$ all $q\in Q$ s.t. $\deg(\poly(q))=d$
\IF{$Q=\emptyset$}
\STATE  {\bf{return}} $(\poly(T))$
\ELSE
\STATE $p:=Q[1]$
\STATE  $d:= \deg(p) $
\ENDIF
\ENDWHILE
\STATE $\vdots$
\end{algorithmic}
\end{algorithm}
\vspace*{-1cm}
\begin{algorithm}[H]
\caption{{\sc test}}
\begin{algorithmic}
 \STATE {\bf{Input:}}  A finite set $U$ of monomials
 \STATE {\bf{Output:}} True if any element of $U$ has the same number of Pommaret and Janet multiplicative variables, and false otherwise
\IF{$\exists u\in U$ s.t. $ M_{\mathcal{P}, \prec }(u,U) \ne  M_{\mathcal{J}, \prec }(u,U)$}
\STATE $ V:= M_{\mathcal{J}, \prec}(u,F) \setminus M_{\mathcal{P}, \prec }(u,F)$
\STATE {\bf{return}}$(false,V[1],x_{\cls(u)})$
\ENDIF
\STATE {\bf{return}}  $(true)$
\end{algorithmic}
\end{algorithm}
\begin{theorem}
  {\sc HDQuasiStable} algorithm terminates in finitely many steps and it returns a linear change of variables for a given homogeneous ideal so that the changed ideal (after performing the change on the input ideal) possesses a  finite Pommaret basis.
\end{theorem}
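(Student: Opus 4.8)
The plan is to reduce the statement to the termination and correctness of the deterministic algorithm of \cite{Seiler2}, since \textsc{HDQuasiStable} shares its outer structure: the sequence of elementary coordinate changes $\phi:x_{\cls(u)}\mapsto x_{\cls(u)}+cx_j$, the stopping test, and the update rule are all identical, and the only modification is that each Janet basis is produced by \textsc{HDInvolutiveBasis} rather than by a plain completion. Accordingly I would organize the argument in three parts: correctness of the Hilbert-driven Janet subroutine, correctness of the output upon exit, and termination.

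First I would verify that \textsc{HDInvolutiveBasis} returns the same minimal Janet basis as \textsc{InvolutiveBasis}. The Hilbert function of a homogeneous ideal is invariant under a linear change of coordinates, so the value $\hf_{\li F\ri}(d)$ fed to the subroutine is indeed the Hilbert function of the current transformed ideal. By the remark closing Section~2, an involutively head-autoreduced set is a Janet basis precisely when $\hf=\ihf$ in every degree; hence once $\hf_{\li F\ri}(d)=\ihf_T(d)$ holds, Macaulay's theorem guarantees that no prolongation of degree $d$ can contribute a new leading monomial, so discarding every $q\in Q$ with $\deg(\poly(q))=d$ is safe. This is exactly the involutive analogue of Traverso's Hilbert-driven criterion \cite{zbMATH00987585}, and it shows the subroutine computes the correct basis while skipping only provably superfluous reductions.

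Next I would treat the exit condition. When the while-loop stops we have $A=true$, i.e. \textsc{test} reports $M_{\mathcal{P},\prec}(u,U)=M_{\mathcal{J},\prec}(u,U)$ for every $u\in U=\lm(J)$. By the direct-sum description of the involutive span recalled in Section~2, the equality of the multiplicative-variable sets on all leading monomials makes the Janet and Pommaret involutive spans of $J$ coincide; and since $J$ is a \emph{minimal} Janet basis, \cite[Cor.~15]{zbMATH01574478} guarantees that it is Pommaret head-autoreduced. Hence $J$ is itself a Pommaret basis of the transformed ideal $\li\Phi(F)\ri$, finiteness being automatic because Janet bases are always finite. Equivalently, the leading ideal $\li\lm(J)\ri$ is quasi-stable in the sense of Definition~\ref{defsta}, which is the desired conclusion.

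The remaining, and hardest, part is termination, which I would inherit from \cite{Seiler2}. The key is a well-foundedness argument: the Hilbert function is fixed throughout the computation, so only finitely many monomial ideals can ever occur as $\li\lm(J)\ri$. Each elementary change adds a multiple of the larger variable $x_j$ (note $j<\cls(u)$, so $x_j\succ x_{\cls(u)}$) to the smaller one $x_{\cls(u)}$; for generic $c$ this can only enlarge the leading ideal in the term order, and an update is performed exactly when the test output strictly changes, so the leading ideal strictly increases at every accepted step. Being bounded above by the generic initial ideal $\Gin(\li F\ri)$ --- which is quasi-stable --- and living in a finite poset, the leading ideal can change only finitely often. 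The main obstacle is precisely to make this monotonicity rigorous and to handle the randomized choice of $c$: one must show that the ``bad'' values of $c$ (those for which the test output fails to improve) lie in a proper Zariski-closed subset, so that a generic choice makes progress and the loop advances almost surely. Both ingredients are supplied by the analysis of generic coordinate transformations in \cite{Seiler2}, and combining them with the finiteness of admissible leading ideals yields termination.
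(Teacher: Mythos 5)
Your proposal is correct and follows essentially the same route as the paper: correctness of the Hilbert-driven pruning via invariance of the Hilbert function under linear automorphisms together with the dimension count $\hf_{\I}(d)=\ihf_T(d)$ forcing degree-$d$ prolongations to be superfluous, and termination delegated to the genericity and finiteness results of \cite{Seiler2} (Prop.~2.9 and Rem.~9.11). Your additional verification of the exit condition (via \cite[Cor.~15]{zbMATH01574478}) and your sketch of the monotonicity mechanism behind Seiler's termination argument are elaborations the paper leaves implicit, not a different method.
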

\begin{proof}
  Let $\I$ be the ideal generated by $F$; the input of {\sc HDQuasiStable} algorithm. The termination of this algorithm follows, from one side, from the termination of the algorithms to compute Janet bases. From the other side, \cite[Prop. 2.9]{Seiler2} shows that there exists an open Zariski set $U$ of $\kk^{n \times n}$ so that for each linear change of variables, say $\Phi$ corresponding to an element of $U$ we have $\Phi(\I)$ has a finite Pommaret basis. Moreover, he proved that the process of finding such a linear change termintaes in finitely many steps (see \cite[Rem. 9.11]{Seiler2}). Taken together, these arguments show that {\sc HDQuasiStable} algorithm terminates. To prove the correctness, using the notations of {\sc HDInvolutiveBasis} algorithm, we shall prove that any $p\in Q$ removed by Hilbert driven strategy reduces to zero. In this direction, we recall that any change of variables is a linear automorphism of $\P$, \cite[page 52]{zbMATH05790832}. Thus, for each $i$, the dimension over $\kk$ of components of degree $i$ of $\I$ and that of $\I$ after the change remains stable. This yields that the Hilbert function  of $\I$ does not change after a linear change of variables. Let $J$ be the Janet basis computed by {\sc InvolutiveBasis}. One can readily observe that $\hf_\I(d)=\ihf_J(d)$ for each $d$, and therefore from the first Janet basis one can derive the Hilbert function of $\I$ and use it to improve the next Janet bases computations. Now, suppose that $F$ is  the input of {\sc HDInvolutiveBasis} algorithm, $p\in Q$ and $\hf_\I(d)=\ihf_T(d)$ for $d=\deg(\poly(p))$. It follows that $\dim_{\kk}(\li F\ri_d)=\dim_{\kk}(\li \poly(T) \ri_d)$ and therefore the polynomials of $\poly(T)$ generate involutively whole $\li F\ri_d$ and this shows that $p$ is superfluous which ends the proof. $\qed$
\end{proof}
\begin{remark}
  We remark that we assumed that the input of {\sc InvolutiveBasis} and {\sc HDQuasiStable} algorithms should be homogeneous, however the former algorithm works also for non-homogeneous ideals as well. Further, the latter algorithm also may be applied for non-homogeneous ideals provided that we consider the affine Hilbert function for such ideals; i.e. $\hf_{\I}(s)=\dim_{\kk}(\P_{\le s}/\I_{\le s})$.
\end{remark}
\cite{Seiler2} provides a number of equivalent characterizations of the ideals which have finite Pommaret bases. Indeed, a given ideal has a finite Pommaret basis if only if the ideal is in {\em quasi stable position} (or equivalently if the coordinates are $\delta$-regular) see \cite[Prop. 4.4]{Seiler2}.
\begin{definition}
  \label{defsta}
A monomial ideal $\I$ is called \emph{quasi stable} if for any monomial $m \in \I$  and all integers $i, j,s$ with $1 \le j < i \le n$ and $s>0$, if $x_i^s\mid m$ there exists an integer $t\ge 0$ such that $x_j^tm/x_i^s\in \I$. A homogeneous ideal $\I$ is in {\em quasi stable position} if $\lt(\I)$ is quasi stable.
\end{definition}
\begin{example}
  The ideal $\I=\li x_2^2x_3, x_2^3, x_1^3 \ri\subset \kk[x,y,z]$ is a quasi stable monomial ideal  and its Pommaret basis is $\{x_2^2x_3, x_2^3, x_1^3, x_1x_2^2x_3, x_1x_2^3, x_1^2x_2^2x_3, x_1^2x_2^3\}$.
\end{example}

\section{Experiments and Comparison}
We have implemented both algorithms {\sc InvolutiveBasis} and {\sc HDQuasiStable} in {\sc Maple 17}\footnote{The {\sc Maple} code of the implementations of our algorithms and examples are available at {\tt http://amirhashemi.iut.ac.ir/softwares}}. It is worth noting that, in the given paper, we are willing to compare  behavior of {\sc InvolutiveBasis} and {\sc HDQuasiStable} algorithms with Gerdt et al. \cite{zbMATH06264259} and {\sc QuasiStable} \cite{Seiler2} algorithms, respectively (we shall remark that {\sc QuasiStable}  has the same structure as the {\sc HDQuasiStable}, however to compute Janet bases we use Gerdt's algorithm). For this purpose, we used some well-known examples from computer algebra literature. All computations were done over $\mathbbm{Q}$, and for the input degree-reverse-lexicographical monomial ordering. The results are shown in the following tables where the time and memory columns indicate, respectively, the consumed CPU time in seconds and amount of megabytes of used memory. The $C_1$ and $C_2$ columns show, respectively, the number of polynomials removed by $ C_{1} $ and $ C_{2} $ criteria by the corresponding algorithm. The sixth column shows the number of polynomials eliminated by the new criterion related to syzygies applied in  {\sc InvolutiveBasis} and {\sc InvolutiveNormalForm} algorithms. The F$_5$ and S columns show the number of polynomials removed, respectively,  by F$_5$ and super-top-reduction criteria. Three last columns represent, respectively, the number of reductions to zero, the number and the maximum degree of polynomials in the final involutive basis (we note that for Gerdt et al. algorithm the number of polynomials is the size of the basis after the minimal process). The computations in this paper are performed on a personal computer with $2.70$ GHz Intel(R) Core(TM) i7-2620M CPU, $8$ GB of RAM, $64$ bits under the Windows $7$ operating system. 

\begin{center}
{\scriptsize 
\begin{tabular}{ |c||c|c|c|c|c|c|c|c|c|c| } 
\hline
Liu & time & memory & $C_{1}$ & $C_{2}$ & $ \syz $  & $ F_{5} $ & S & redz & poly& deg \\
\hline
{\sc InvolutiveBasis}  & 1.09 & 37.214 & 4 & 3 & 2 & - & - & 25 & 19 & 6  \\
\hline 
Gerdt et al.  & 2.901 & 41.189 & 7 & 39 & -  & 25 & 0 & 1 & 19  & 7 \\
\hline

\multicolumn{1}{c}{}  \\
\hline
Noon & time & memory & $C_{1}$ & $C_{2}$ & $ \syz $  & $ F_{5} $ & S & redz & poly& deg \\
\hline
{\sc InvolutiveBasis}  & 3.822 & 43.790 & 4 & 15 & 6 & - & - & 69 & 51 & 10  \\
\hline 
Gerdt et al. & 45.271 & 670.939 & 8 & 107 & -  & 49 & 3 & 17 & 51  & 10 \\
\hline

\multicolumn{1}{c}{}  \\
  \hline
Haas3 & time & memory & $C_{1}$ & $C_{2}$ & $ \syz $ & $ F_{5} $ & S & redz & poly & deg \\
\hline
{\sc InvolutiveBasis}  & 8.424 & 95.172 & 0 & 20 & 24 & - & - & 203 & 73 & 13  \\
\hline 
Gerdt et al.  & 41.948 & 630.709 & 1 & 88 & -  & 88 & 16 & 68 & 73  & 13 \\
\hline

\multicolumn{1}{c}{}  \\
\hline
Sturmfels-Eisenbud & time & memory & $C_{1}$ & $C_{2}$ & $ \syz $  & $ F_{5} $ & S & redz & poly& deg \\
\hline
{\sc InvolutiveBasis}  & 22.932 & 255.041 & 28 & 103 & 95 & - & - & 245 & 100 & 6 \\
\hline 
Gerdt et al. & 2486.687 & 30194.406 & 29 & 1379 & -  & 84 & 11 & 40 & 100  & 9 \\
\hline

\multicolumn{1}{c}{}  \\
\hline
Lichtblau & time & memory & $C_{1}$ & $C_{2}$ & $\syz $  & $ F_{5} $ & S & redz & poly& deg \\
\hline
{\sc InvolutiveBasis}  & 24.804 & 391.3 &  0& 5 & 6 & - & - & 19 & 35 & 11  \\
\hline 
Gerdt et al.  & 205.578 & 3647.537 & 0 & 351  & -  & 18 & 0 & 31 & 35  & 19 \\
\hline

\multicolumn{1}{c}{}  \\
\hline
Eco7 & time & memory & $C_{1}$ & $C_{2}$ & $ \syz $ & $ F_{5} $ & S & redz & poly& deg \\
\hline
{\sc InvolutiveBasis}  & 40.497 & 473.137 & 51 & 21 & 30 & - & - & 201 & 45 & 6  \\
\hline 
Gerdt et al. & 1543.068 & 25971.714 & 63 & 1717 & -  & 175 & 8 & 18 & 45  & 11 \\
\hline

\multicolumn{1}{c}{}  \\
\hline
Katsura5 & time & memory & $C_{1}$ & $C_{2}$ & $ \syz $  & $ F_{5} $ & S & redz & poly& deg \\
\hline
{\sc InvolutiveBasis}  & 46.956 & 630.635 & 21 & 0 & 2 & - & - & 68 & 23 & 12  \\
\hline 
Gerdt et al. & 42.416 & 621.551 & 62 & 73 & -  & 114 & 1 & 21 & 23  & 8 \\
\hline

\multicolumn{1}{c}{}  \\
\hline
Katsura6 & time & memory & $C_{1}$ & $C_{2}$ & $ \syz $  & $ F_{5} $ & S & redz & poly& deg \\
\hline
{\sc InvolutiveBasis}  & 81.526 & 992.071 & 43 & 0 & 4 & - & - & 171 & 43 & 8  \\
\hline 
Gerdt et al. & 608.325 & 795.196 & 77 & 392 & -  & 209 & 1 & 41 & 43  & 11 \\
\hline
\end{tabular}
}
\end{center}

As one can observe {\sc InvolutiveBasis} is a signature-based variant of Gerdt's algorithm which has a structure closer to Gerdt's algorithm and it is more efficient than Gerdt et al. algorithm. Moreover, we can see the detection of criteria and the number of reductions to zero by the algorithms are different. Indeed, this difference is due to the selection strategy used in each algorithm. More precisely, in the Gerdt et al. algorithm the set of non-multiplicative prolongations is sorted by POT ordering however in {\sc InvolutiveBasis} it is sorted using Schreyer ordering. However, one needs to implement it efficiently in C/C++ to be able to compare it with GINV software\footnote{See {\tt http://invo.jinr.ru}}.

The next tables  illustrate an experimental comparison of {\sc
  HDQuasiStable} and {\sc QuasiStable} algorithms. In these tables HD
column shows the number of polynomials removed by Hilbert driven strategy
in the corresponding algorithm. Further, the chen column shows the number of linear changes that one needs to transform the corresponding ideal into quasi stable position. The deg column represents the maximum degree of the output Pommaret basis (which is the Castelnuovo-Mumford regularity of the ideal, see \cite{Seiler2}). Further, each column shows the number of detection by corresponding criterion for all computed Janet bases. Finally, we shall remark that in the next tables we use the homogenization of the generating set of the test examples used in the previous tables. In addition, the computation of Janet basis of an ideal generated by a set $F$ and the one of the ideal generated by the homogenization of $F$ are not the same. For example, the CPU time to compute the Janet basis of the homogenization of Lichtblau example is $270.24$ sec..   

\begin{center}
{\scriptsize
\begin{tabular}{ |c||c|c|c|c|c|c|c|c|c| } 

\hline
Liu & time & memory & $C_{1}$ & $C_{2}$ & $\small{ \syz }$ &  HD  & redz & chen & deg \\
\hline
{\sc HDQuasiStable} & 4.125 & 409.370 & 4 & 3 & 2  & 93 & 56 & 4  & 6 \\
\hline 
{\sc QuasiStable}  & 9.56 & 1067.725 & 14 & 3 & - & - & 151 & 4 & 6  \\
\hline

\multicolumn{1}{c}{}  \\
\hline
Katsura5 & time & memory & $C_{1}$ & $C_{2}$ & $\small{ \syz }$ &  HD  & redz & chen & deg \\
\hline
{\sc HDQuasiStable} & 67.234 & 9191.288 & 44 & 3 & 6  & 185 & 168 & 2  & 8 \\
\hline 
{\sc QuasiStable}  & 145.187 & 26154.263 & 86 & 29 & - & - & 359 & 2 & 8  \\
\hline

\multicolumn{1}{c}{}  \\
\hline
Weispfenning94 & time & memo & $C_{1}$ & $C_{2}$ & $\small{ \syz }$ &  HD  & reds & chen & deg \\
\hline 
{\sc HDQuasiStable} & 110.339 & 6268.532 & 0 & 1 & 9  & 45 & 170 & 1  & 15 \\

\hline
{\sc QuasiStable}  & 243.798 & 16939.468 & 0 & 2 & - & - & 85 & 1 & 15  \\
\hline

\multicolumn{1}{c}{}  \\
\hline
Noon & time & memory & $C_{1}$ & $C_{2}$ & $\small{ \syz }$ &  HD  & redz & chen & deg \\
\hline
{\sc HDQuasiStable} & 667.343 & 66697.995 & 4 & 25 & 6  & 325 & 119 & 4  & 11 \\
\hline 
{\sc QuasiStable}  & 1210.921  & 205149.994 & 16 & 35 & - & - & 450 & 4 & 11 \\
\hline

\multicolumn{1}{c}{}  \\
\hline
Sturmfels-Eisenbud  & time & memory & $C_{1}$ & $C_{2}$ & $\small{ \syz }$ & HD  & redz & chen & deg \\
\hline
{\sc HDQuasiStable} & 1507.640 & 125904.515 & 86 & 308 & 440  & 1370 & 1804 & 12 & 8 \\
\hline 
{\sc QuasiStable} & 843.171 & 96410.344 & 218 & 1051 & - & - & 3614 & 12 & 8  \\
\hline

\multicolumn{1}{c}{}  \\
\hline
Eco7 & time & memory & $C_{1}$ & $C_{2}$ & $\small{ \syz }$ &  HD  & redz & chen & deg \\
\hline
{\sc QuasiStable} & 2182.296 & 241501.340 & 298 & 98 & 373  & 1523 & 1993 & 8  & 11 \\
\hline 
{\sc QuasiStable}  & 2740.734 & 500857.600 & 547 & 725 & - & - & 3889 & 8 & 11 \\
\hline

\multicolumn{1}{c}{}  \\
  \hline
Haas3 & time & memory & $C_{1}$ & $C_{2}$ & $\small{ \syz }$ &  HD  & redz & chen & deg \\
\hline
{\sc HDQuasiStable} & 5505.375 & 906723.699 & 0 & 0 & 91  & 84 & 255 & 1  & 33 \\
\hline 
{\sc QuasiStable}  & 10136.718 & 1610753.428 & 1 & 120 & - & - & 430 & 1 &33 \\
\hline

\multicolumn{1}{c}{}  \\
\hline
Lichtblau & time & memory & $C_{1}$ & $C_{2}$ & $\small{ \syz }$ &  HD & redz & chen & deg \\
\hline
{\sc HDQuasiStable} & 16535.593 & 2051064.666 & 0 & 44 & 266  & 217 & 265 & 2  & 30 \\
\hline 
{\sc QuasiStable}  & 18535.625 & 2522847.256 & 0 & 493 & - & -&751 & 2 & 30  \\
\hline

\end{tabular}
}
\end{center}

\section{Conclusion and Perspective}
In this paper, a modification of Gerdt's algorithm \cite{14a} which is a signature-based version of the involutive algorithm \cite{gerdt,14a} to compute minimal involutive bases is suggested. Additionally, we present a Hilbert driven optimization of the proposed algorithm, to compute (finite) Pommaret bases. In doing so, the proposed algorithm computes iteratively Janet bases by using the modified Gerdt's algorithm and use them, in accordance to ideas of \cite{Seiler2}, to perform the variable transformations. The new algorithms have been implemented in {\sc Maple} and they are compared with the Gerdt's algorithm and with the  algorithm presented in \cite{Seiler2}  in terms of the CPU time and used memory, and several other criteria. For all considered examples, the {\sc Maple} implementation of the new algorithms are shown to be superior over the existing ones. One interesting research direction might be to develop a new version of the proposed signature-based version of the involutive algorithm by incorporating the advantages of the algorithm in \cite{14a}, in particular of the Janet trees \cite{tree}. Furthermore,  it would be of interest to study the behavior  of different possible techniques to improve the computation of Pommaret bases.

\section*{Acknowledgments.}  {\scriptsize The research of the second author was in part supported by a grant from IPM (No. 94550420).}

\bibliographystyle{splncs03}

\end{document}